\definecolor{MyLinkColor}{rgb}{0,0,0.4}
\newcommand{\h}{\mathcal{h}}
\newcommand{\kb}{\mathcal{k}}
\newcommand{\be}{\mathbf{e}}
\newtheorem{thm}{Theorem}[section]
\newtheorem{rem}[thm]{Remark}
\newenvironment{proof}{\paragraph{Proof:}}{\hfill$\square$}
\begin{document}
\title{Azimuthal equatorial flows in spherical coordinates with discontinuous stratification}

\author[$\star$]{Calin I. Martin}
\affil[$\star$]{Faculty of Mathematics, University of Vienna, Oskar-Morgenstern-Platz 1, 1090 Vienna, Austria, calin.martin@univie.ac.at}
\lhead{C. I. Martin}
\rhead{Azimuthal flows with discontinuous stratification}
\maketitle

\begin{abstract}
We are concerned here with an exact solution to the governing equations for geophysical fluid dynamics in spherical coordinates which incorporates discontinuous fluid stratification. This solution represents a steady, purely--azimuthal equatorial flow with an associated free-surface and an interface separating two fluid regions, each of which having its own continuous distribution of density. However, the two density functions do not match along the interface. Following the derivation of the  solution we demonstrate that there is a well-defined relationship between the imposed pressure at the free-surface and the resulting distortion of the surface's shape. Moreover, imposing the continuity of the pressure along the interface generates an equation that describes (implicitly) the shape of the interface. Interestingly, it turns out that the interface defining function has infinite regularity.
\end{abstract}
\noindent
{\bf Mathematics Subject Classification}: 35Q31, 35Q35, 35Q86, 35R35, 76E20
\bigbreak
\noindent
{\bf Keywords}: Azimuthal flows, discontinuous density, spherical coordinates, Coriolis force, implicit function theorem.
\bigbreak

\section{Introduction}
The Pacific Ocean displays within a band of 150 km on each side of the Equator and extending longitudinally over about 16.000 km some outstanding features. First and foremost we would like to point out a pronounced density stratification
(greater than anywhere else in the ocean \cite{FedBr}) whose hallmark is the presence of a rather sharp interface (called pycnocline or thermocline) that separates a shallow near-surface layer of relatively warm water from a deep layer of colder and denser water.
In fact,  flow stratification (which in our study varies with depth) represents an intrinsic trait of geophysical water flows that particularly applies to large-scale ocean movements, cf. \cite{Gill,MP, Val}. Greatly impacted by changes in temperature and salinity, density fluctuations arise frequently in the ocean and lead to a layering of the flow: fluid layers of different densities organize themselves so that the higher densities are found below lower densities, cf. \cite{BecCushRoi, CJGAFD, CJPoF17, FedBr, Kes, McC}. This vertical layering ushers an expected bundle of properties which considerably influences the velocity field.  As a result internal gravity waves may appear, cf. Theorem \ref{existence}.

On the other hand, the Coriolis forces (created by the Earth's rotation) together with the prevailing westward wind pattern generate an underlying current field that displays flow-reversal \cite{Boy, CJPoF19}: the current field changes (in a band of about $2^{\circ}$ latitude around the Pacific Equator) from a westward flow near the surface to an eastward-flowing jet called the Equatorial Undercurrent (EUC) whose core resides more or less on the thermocline. An interesting aspect in the 
equatorial ocean dynamics is that the change of sign of the Coriolis force across the Equator turns the latter into an actual waveguide that enables azimuthal flow propagation which is symmetric with respect to the Equator.

Comprehensive mathematical models that handle wave-current interactions and internal waves in the presence of density stratification within the setting of nonlinear geophysical governing equations are quite rare and of very recent date \cite{ComAll, CJGAFD, CI, cim, CICMP, CulIva, IvaNARWA}. The stratification issue alone, in spite of being of utmost 
practical relevance, has remained unamenable to a rigorous mathematical analysis until recently:  for a selection of recent advances in the field (in the scenario of two-dimensional gravity water waves without Earth's rotation effects) we refer the reader to \cite{CI, cim, CICMP, EMM, Eschall, HM2, MatMH, HenMatAV, Wal1, Wal2}.

In this paper, we address the topic of density stratification of discontinuous type for flows exhibiting vertical structure, internal waves and a preferred propagation direction and so derive and, subsequently, analyze an exact and explicit solution which describes in a rotating frame a stratified inviscid and incompressible azimuthal equatorial water flow which presents a free surface and an interface. We would like to point out that explicit and/or exact solutions in fluid dynamics represent a very rare occurrence. They are important not only from a mathematical perspective but also for providing a basis for more
obtaining more elaborate and physically meaningful solutions by employing perturbative or asymptotic methods \cite{JohnPhT}.

Mindful of the previous aspects we aim at furthering a line of work put forward by Constantin \& Johnson concerning mathematical analyses of geophysical water flows.
First, Constantin \& Johnson constructed in \cite{CJaz, CJazAcc} in terms of spherical coordinates exact solutions to the full GFD governing equations. These solutions represent  purely-azimuthal, depth-varying flows, can be chosen to model both the equatorial undercurrent (EUC),  and the Antarctic Circumpolar Current (ACC), respectively, and describe purely homogeneous flows without stratification. Recently, Henry \& Martin presented equatorial flow solutions that allow for
continuous stratification that varies linearly with depth \cite{HenMarDPDE, HenMarJDE} as well as of most general type \cite{HenMarARMA, HenMarNONL}. Martin \& Quirchmayr \cite{MarQuirchJMP} devised solutions flows modelling the ACC 
exhibiting a general fluid stratification that varies both with depth and latitude. The special trait of the solution that we derive consists in its ability to accommodate a discontinuous density stratification that varies with depth.
More precisely, we allow a vertical layering as described in the beginning, with two layers of different, non-constant densities, where the denser layer sits below the less dense one (stable stratification): thus an interface arises which resembles the role played by the thermocline in the setting of observed equatorial flows.

Subsequent to the presentation of the equations of motion in Section \ref{presentation} we derive in Section \ref{solutions} explicit formulas for the velocity field and for the pressure function. We then exploit the dynamic boundary condition and find a relation between the imposed pressure on the surface and the resulting surface distortion. Moreover, from the balance of forces at the interface between the two layers, we derive another equation for the interface defining function. Although the two relations which define implicitly the free surface and the interface, respectively, appear in a convoluted form, they have the redeeming feature of being amenable to a functional analytic study by means of the implicit function theorem.
We present in Section \ref{analysis} results which show that our exact solution manifests acceptable properties: a growth in pressure along the surface leads to a decline in height of the free surface. Moreover, we also show that the interface between the two layers has infinite regularity.
\section{Equations of motion}\label{presentation}
The governing equations for geophysical fluid dynamics (GFD) are formulated in a spherical coordinate system which is fixed at a point on the Earth's surface as follows. We work in spherical coordinates, denoted $(r, \theta,\varphi)$, where $r$ is the distance from the centre of the earth, $\theta$ (with $0\leq \theta\leq \pi$) is the polar angle, 
and $\varphi$ (with $0\leq \varphi< 2\pi$) is the azimuthal angle, see Figure \ref{spherical_coord}. The location 
of the North and South poles are at $\theta=0,\pi$, respectively, while the Equator is situated at $\theta=\frac{\pi}{2}$. 
The unit vectors in this $(r,\theta, \varphi)$ system are $(\be_r, \be_{\theta}, \be_{\varphi})$, respectively, and the corresponding velocity components are $({\bf w}, {\bf v}, {\bf u})$;
$\be_{\varphi}$ points from West to East, and 
$\be_{\theta}$ points from North to South. A noteworthy aspect is that the fluid domain is stratified where the stratification is brought about by the changes in the (discontinuous) density. More precisely, denoting with $R\approx 6378$ km the Earth's radius and with $r_0>0, r_1$ some constants, we assume that the flow consists of a top layer 
$$D_{top}:=\{(r,\theta, \varphi): R+r_1+h(\theta, \varphi)\leq r\leq R+r_0+k(\theta,\varphi)\},$$
that lies above a bottom layer 
$$D_{bottom}:=\{(r,\theta, \varphi): R+d(\theta, \varphi)\leq r\leq R+r_1+h(\theta,\varphi)\}.$$
While above $d$ is a given function, $h$ and $k$ are unknowns of the problem and denote the interface and the free surface defining functions, respectively.

\begin{figure}[h]
\begin{center}
\includegraphics*[width=0.75\textwidth]{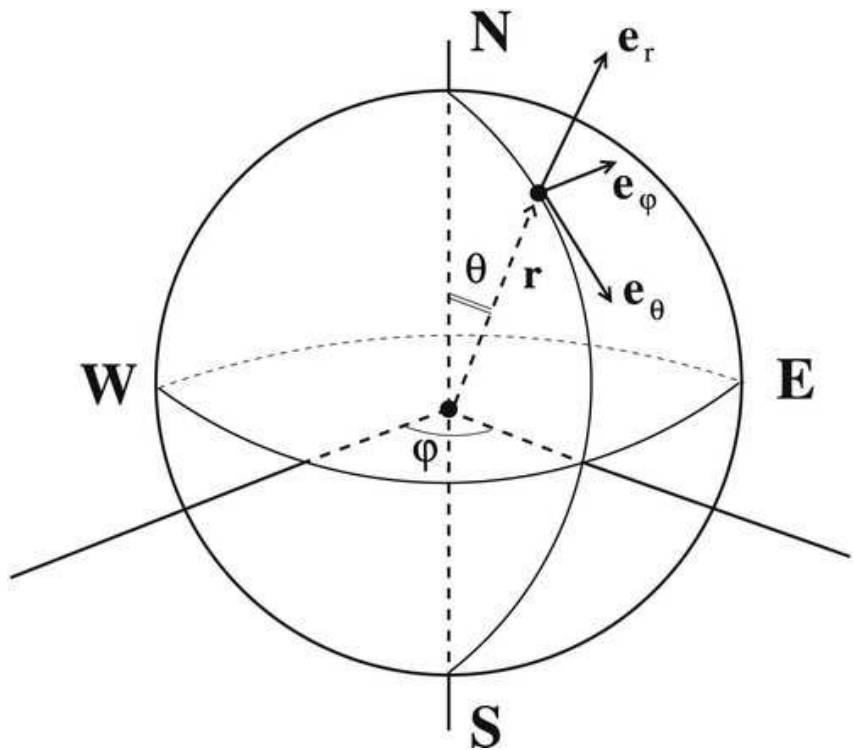}
\end{center}
\caption{The spherical coordinate system as presented in Section \ref{presentation}. The variable $r$ represents the distance from the origin, $\theta$ is the polar angle ($\pi/2-\theta$ being the angle of latitude), and $\varphi$ is the angle of longitude (azimuthal angle). } 
\label{spherical_coord}
\end{figure}

To draw attention to the layering aspect of the fluid domain we will use the index $1$ as a label for the upper layer. Whenever we refer to the overall physical variable without specification of the layer, we shall use bold face symbol. For example, 
the density $\bm{\rho}$ is assumed to be discontinuous across $r=R+h(\theta, \varphi)$ and is distributed in the two layers as 
\begin{equation}
\bm{\rho}(r)=\left\{ \begin{array}{l}\rho_1(r)\,\,{\rm for}\,\,(r,\theta,\varphi)\in D_{top},\\ 
 \rho(r)\,\,{\rm for}\,\,(r,\theta,\varphi)\in D_{bottom}, \end{array}\right. 
\end{equation}
being a differentiable function of $r$ in each of $D_{top}$ and $D_{bottom}$. Likewise, the components of the velocity field with respect to the orthonormal system $\{\be_r,\be_{\theta},\be_{\varphi}\}$ are denoted with $({\bf w}, {\bf v}, {\bf u})$. That is,
\begin{equation}
({\bf w}, {\bf v}, {\bf u})=\left\{\begin{array}{l}(w_1,v_1,u_1)\,\,{\rm in}\,\,D_{top},\\
(w,v,u)\,\,{\rm in}\,\,D_{bottom}.    \end{array}\right.
\end{equation}

\noindent We present now the equations of motion in a coordinate system with its origin at the centre of the sphere, where the effects of the earth's rotation are incorporated into our system. In order to do this we associate  $\{\be_r, \be_{\theta},\be_{\varphi}\}$ with a point fixed on the sphere which is rotating about its polar axis. Consequently,  the governing equations will contain Coriolis $2{\bm\Omega}\times \bf{U}$ terms and centripetal ${\bm\Omega}\times({\bm\Omega}\times \bf{r})$ terms,
with $ {\bf r}=r\be_r$, ${\bf U}={\bf w}\be_r+{\bf v}\be_{\theta}+{\bf u}\be_{\varphi}$, and
${\bm\Omega}=\varOmega(\be_r\cos\theta-\be_{\theta}\sin\theta),$
where $\varOmega\approx 7.29\times 10^{-5}$ rad s$^{-1}$ is the constant rate of rotation of the earth. The contribution of the Coriolis and centripetal acceleration
is therefore given by
\begin{equation*}
 2\varOmega(-{\bf u}\sin\theta\be_r-{\bf u}\cos\theta\be_{\theta}+({\bf w}\sin\theta +{\bf v}\cos\theta)\be_{\varphi})-r\varOmega^2(\sin^2\theta\be_r+\sin\theta\cos\theta\be_{\theta}).
\end{equation*}

Therefore, the GFD governing equations, in a spherical coordinate system with its origin at the centre of the sphere, are given by the Euler equation
 \begin{equation}\label{Eulereq'}
 \begin{split}
{\bf w}_t+{\bf w}{\bf w}_r +\frac{{\bf v}}{r}{\bf w}_{\theta}+\frac{{\bf u}}{r\sin\theta}{\bf w}_{\varphi}-\frac{1}{r}({\bf v}^2+{\bf u}^2) &-2\varOmega {\bf u}\sin\theta-r\varOmega^2 \sin^2\theta  \\
&= -\frac{{\bf p}_r}{\bm{ \rho}}+\mathfrak{F}_r\\
  \mathbf{ v}_t+{\bf w}{\bf v}_r +\frac{{\bf v}}{r}{\bf v}_{\theta}+\frac{{\bf u}}{r\sin\theta}{\bf v}_{\varphi}+\frac{1}{r}({\bf w}{\bf v}-{\bf u}^2\cot\theta) & -2\varOmega {\bf u}\cos\theta -r\varOmega^2 \sin\theta\cos\theta \\
  & = -\frac{ {\bf p}_{\theta}}{ \bm{\rho} r} +\mathfrak{F}_{\theta}\\
    {\bf u}_t+{\bf w} {\bf u}_r +\frac{ {\bf v} }{r}{\bf u}_{\theta}+\frac{ {\bf u} }{r\sin\theta}{\bf u}_{\varphi}+\frac{1}{r}({\bf w} {\bf u}+{\bf v}  {\bf u}\cot\theta)&+2\varOmega {\bf w}\sin\theta +2\varOmega {\bf v}\cos\theta\\
    &= -\frac{ {\bf p}_{\varphi}}{ \bm{\rho} r\sin\theta} +\mathfrak{F}_{\varphi},
\end{split}
\end{equation}
 and the equation of mass conservation
\begin{equation}\label{masscons}
 \frac{1}{r^2}\frac{\partial}{\partial r}(r^2  \bm{\rho} {\bf w})+\frac{1}{r\sin\theta}\frac{\partial}{\partial\theta}( \bm{ \rho} {\bf  v}\sin\theta)+\frac{1}{r\sin\theta}\frac{\partial }{\partial\varphi}(\bm{\rho} {\bf u})=0.
\end{equation}
Here ${\bf p}(r,\theta,\varphi)$ denotes the pressure in the fluid, $\mathfrak{F}=\mathfrak{F}_r\be_r+\mathfrak{F}_{\theta}\be_{\theta}+ \mathfrak{F}_{\varphi}\be_{\varphi}$ is the body-force vector.
The governing equations are supplemented by associated boundary conditions as follows. At the free-surface  we require the dynamic condition involving the surface pressure 
\begin{subequations}\label{Bound}
\begin{equation}\label{surfacepressure}
p_1=P_1(\theta,\varphi),
\end{equation}
and the kinematic condition 
\begin{equation}\label{kin_surf}
w_1=\frac{v_1}{r}\frac{\partial k}{\partial\theta}+\frac{u_1}{r\sin\theta}\frac{\partial k}{\partial\varphi},
\end{equation}
to hold. 

At the interface, $r=R+h(\theta, \varphi)$, we ask that the normal components of the velocity fields from the upper and lower layer, respectively, are the same. The latter is equivalent with the condition
\begin{equation}\label{kin_int}
\begin{split}
(w\be_r+v\be_{\theta}+u\be_{\varphi})&\cdot\left(\be_r-\frac{h_{\theta}}{r}\be_{\theta}-\frac{h_{\varphi}}{r\sin\theta}\be_{\varphi}\right)\\
&=(w_1\be_r+v_1\be_{\theta}+u_1\be_{\varphi})\cdot\left(\be_r-\frac{h_{\theta}}{r}\be_{\theta}-\frac{h_{\varphi}}{r\sin\theta}\be_{\varphi}\right).
\end{split}
\end{equation}
Moreover, to ensure the balance of forces at the interface, we also require that the pressure from the upper layer coincides with the pressure from the bottom layer, that is, 
\begin{equation}\label{balance_int}
p(R+h(\theta,\varphi),\theta,\varphi)=p_1(R+h(\theta,\varphi), \theta,\varphi).
\end{equation}

At the bottom of the ocean, which is an impermeable, solid boundary described by the equation $r=d(\theta,\varphi)$, the associated kinematic condition is
\begin{equation}\label{kin_bed}
 w=\frac{v}{r}\frac{\partial d}{\partial\theta}+\frac{u}{r\sin\theta}\frac{\partial d}{\partial\varphi}.
\end{equation}
\end{subequations}

\section{Explicit and exact solutions}\label{solutions}
We set out to derive exact solutions to the equations of motion and boundary conditions from Section \ref{presentation}.
In fact, the velocity field and the pressure function are given explicitly. However, due to intricacies like the usage of spherical coordinates, the complexity of stratification and the very involved formulas for the pressure and velocity, explicite formulas for the free surface and the interface remain elusive. An alleviation of this aspect is represented by a Bernoulli-type relation
between the imposed pressure at the surface and the resulting surface distorsion; an implicit formula for the interface defining function being given by the balance of forces at the interface \eqref{balance_int}. For a selective list of papers
presenting exact solutions pertaining to geophysical fluid dynamics we refer the reader to \cite{BasDcds, ChuIoneY, ACJGR, CoPhysOc2013,  CJaz, CJazAcc, CJPoF17, CJPRS, CJPoF19, HazMar, Hen2013, HsuJmfm14, HsuMh15, IoneJde20, Lyo, MarQuirchMfM, AMJPhA, AMApplAna, AMQAM, Quirch}.

Our objective is to find solutions to equations \eqref{Eulereq'}, \eqref{masscons} and \eqref{Bound} that represent purely-azimuthal steady flows with no variation in the azimuthal direction. Hence, the velocity field satisfies $w=v=w_1=v_1=0$ and $u=u(r,\theta),\,u_1=u_1(r,\theta)$. Moreover, the other unknows of the problem are characterized by the properties $p=p(r,\theta),\,p_1=p_1(r,\theta),\,h=h(\theta),\,k=k(\theta)$. We notice that a flow with the previous features automatically satisfies 
the boundary and interface conditions \eqref{Bound} as well as the equation of mass conservation \eqref{masscons}.
As for the Euler equations, they become equivalent with 
\begin{equation}\label{specialflow}
 \left\{\begin{array}{rcl}
         -\frac{{\bf u}^2}{r}-2\varOmega {\bf u}\sin\theta-r\varOmega^2\sin^2 \theta & =& -\frac{1}{\rho}p_r-g\\
         & &\\
          -\frac{{\bf u}^2}{r}\cot\theta- 2\varOmega {\bf u}\cos\theta -r\varOmega^2\sin\theta\cos\theta & = & -\frac{1}{\rho r}p_{\theta}\\
          &  &\\
0& =& -\frac{1}{\rho}\frac{1}{r\sin\theta}p_{\varphi}
          \end{array},\right. 
\end{equation}
where we assumed that the external body-force is due to gravity alone, giving the body-force vector as $-g\be_r$.
Eliminating the pressure from \eqref{specialflow} we obtain the equation
\begin{equation}\label{charac_met}
\partial_{\theta}\left(\frac{\rho(r)({\bf u}(r,\theta)+\Omega r\sin\theta)^2}{r}\right)-\partial_r\left(\rho(r)({\bf u}(r,\theta)+\Omega r\sin\theta)^2\cot\theta      \right)=0.
\end{equation}
To solve \eqref{charac_met} we appeal to the method of characteristics. An analysis
similar to the one in \cite{HenMarARMA} gives for the velocity field the formula
\\
\begin{equation}\label{vel_form}
{\bf u}(r,\theta)=\left\{
\begin{aligned}
u(r,\theta)=-\Omega r\sin\theta +\frac{F(r\sin\theta)}{\sqrt{\rho(r)}}\quad {\rm if}\quad R+d(\theta)\leq r\leq R_1+h(\theta),\\
u_1(r,\theta)=-\Omega r\sin\theta +\frac{F_1(r\sin\theta)}{\sqrt{\rho_1 (r)}}\quad {\rm if}\quad R_1+h(\theta)\leq r\leq R_0+k(\theta),
\end{aligned} \right.
\end{equation}
\\
where $R_0:=R+r_0, R_1:=R+r_1$, and $x\rightarrow F(x),\, x\rightarrow F_1(x)$ are arbitrary real-valued functions and $r_0,r_1$ are some positive constants.
Further, inserting the expressions of $w$ and $w_1$ from \eqref{vel_form} in \eqref{specialflow}  we have
\begin{equation}
{\bf p}_r=\left\{
\begin{aligned}p_r(r,\theta)=\frac{F^2(r\sin\theta)}{r}-g\rho(r)\quad {\rm if}\quad R+d(\theta)\leq r\leq R_1+h(\theta),\\
p_{1,r}(r,\theta)=\frac{F_1^2(r\sin\theta)}{r}-g\rho_1(r)\quad {\rm if}\quad R_1+h(\theta)\leq r\leq R_0+k(\theta),
\end{aligned}\right.
\end{equation}
and 
\begin{equation}
{\bf p}_{\theta}=\left\{
\begin{aligned}p_{\theta}(r,\theta)=F^2(r\sin\theta)\cot\theta\quad {\rm if}\quad R+d(\theta)\leq r\leq R_1+h(\theta),\\
p_{1,\theta}(r,\theta)=F_1^2(r\sin\theta)\cot\theta\quad {\rm if}\quad R_1+h(\theta)\leq r\leq R_0+k(\theta).
\end{aligned}\right.
\end{equation}
From integration with respect to $r$ we obtain for all $r\in  [R+d(\theta), R_1+h(\theta)]$

\begin{equation}
p(r,\theta)=\int_{(R+d(\theta))\sin\theta}^{r\sin\theta}\frac{F^2(y)}{y}dy-g\int_{R+d(\theta)}^r\rho(\tilde{r})d\tilde{r}
+f(\theta),
\end{equation}
where
\begin{equation}\label{f}
f(\theta)=\int_{R+d(\pi/2)}^{(R+d(\theta))\sin\theta}\frac{F^2(y)}{y}dy-g\int_{R+d(\pi/2)}^{R+d(\theta)}\rho(y)dy.
\end{equation}
For the upper layer we have 
\begin{equation}\label{pressure_up}
p_1(r,\theta)=\int_{(R_1+h(\theta))\sin\theta}^{r\sin\theta}\frac{F_1^2(y)}{y}dy-g\int_{R_1+h(\theta)}^r\rho_1 (\tilde{r})d\tilde{r}+c_1(h,\theta)
\end{equation}
 for all $r\in [R_1+h(\theta), R_0+k(\theta)]$, where 
 \begin{equation}\label{c1}
 \begin{aligned}
 c_1(h,\theta)=&\int_{\pi/2}^{\theta}F_1^2\big((R_1+h(\tilde{\theta}))\sin\tilde{\theta}\big)\left[\cot\tilde{\theta}+
 \frac{h'(\tilde{\theta})}{R_1+h(\tilde{\theta})}\right]\,d\tilde{\theta}\\
 &-g\int_{\pi/2}^{\theta}\rho_1(R_1+h(\tilde{\theta}))h'(\tilde{\theta})\,d\tilde{\theta}.
 \end{aligned}
 \end{equation}
\\ 
We pass now to the (implicit) determination of the two interfaces: the free surface and the interface separating the two layers $D_{bottom}$ and $D_{top}$.
From the dynamic condition on the surface \eqref{surfacepressure} we obtain 
\\
\begin{equation}\label{Bernoulli}
P_1(\theta)=\int_{(R_1+h(\theta))\sin\theta}^{(R_0+k(\theta))\sin\theta}\frac{F_1^2(y)}{y}dy-g\int_{R_1+h(\theta)}^{R_0+k(\theta)}\rho_1 (\tilde{r})d\tilde{r}+c_1(h,\theta).
\end{equation}
\\
Setting $h=0$ and $k=0$ in the previous formula we obtain the quantity
\\
\begin{equation}
P^0_1(\theta)=P_1(h\equiv 0)=\int_{R_1\sin\theta}^{R_0\sin\theta}\frac{F_1^2(y)}{y}dy-g\int_{R_1}^{R_0}\rho_1 (\tilde{r})d\tilde{r}+c_1(0,\theta),
\end{equation}
\\which stands for the pressure required to maintain an undisturbed free surface and an undisturbed interface following the curvature of the Earth.
\\
The balance of forces at the interface, encoded in the relation
 \begin{equation}\label{eq_int}
 p(R_1+h(\theta),\theta)=p_1(R_1+h(\theta),\theta),
 \end{equation}
 will allow us to implicitly determine the shape of the interface. Equation \eqref{eq_int} will be studied in a 
 functional analytic framework by means of nondimensionalization. More precisely, setting
 \begin{equation}
 \mathcal{h}(\theta):=\frac{h(\theta)}{R_1}
 \end{equation}
 we see that \eqref{eq_int} can be written as $\mathcal{G}(\mathcal{h})=0$, where 
 \\
 \begin{equation} \label{mathcalG}
 \begin{aligned}
 &\mathcal{G}(\h)(\theta)=\\
 &\frac{1}{P_{atm}}\left(\int_{(R+d(\theta))\sin\theta}^{(1+\h(\theta))R_1\sin\theta}\frac{F^2(y)}{y}dy-g\int_{R+d(\theta)}^{R_1(1+\h(\theta))}\rho(\tilde{r})d\tilde{r}
+f(\theta)\right)\\
&-\frac{1}{P_{atm}}\int_{\pi/2}^{\theta}F_1^2\big((1+\h(\tilde{\theta}))R_1\sin\tilde{\theta}\big)\left[\cot\tilde{\theta}+
 \frac{\h'(\tilde{\theta})}{1+\h(\tilde{\theta})}\right]\,d\tilde{\theta}\\
 &+\frac{gR_1}{P_{atm}}\int_{\pi/2}^{\theta}\rho_1\big(R_1(1+\h(\tilde{\theta}))\big)\h'(\tilde{\theta})\,d\tilde{\theta},
\end{aligned}
\end{equation}
\\
where $f(\theta)$ is given in \eqref{f}.
Also, the implicit determination of the free surface will be devised from the dynamic condition \eqref{surfacepressure}, the latter being written in the abstract form
\begin{equation}\label{abs_eq}
\mathcal{F}(\h, \mathcal{k} , \mathcal{P}_1)=0,
\end{equation}
with $\mathcal{F}$ acting from the Banach space
$C^1\big(\left(\left[\pi/2,\pi/2+\varepsilon\right]\right)\times C\left(\left[\pi/2,\pi/2+\varepsilon\right] \right)\times 
C\left(\left[\pi/2,\pi/2+\varepsilon\right]\right)\big)$ into itself, where $\varepsilon=0.016$ is suitable for flows in the equatorial region and demarcates a strip of about 100 km about the Equator, cf. \cite{CJaz}.
Above $\mathcal{k} (\theta):=\frac{k(\theta)}{R_0},\,\,\mathcal{P}_1(\theta):=\frac{P(\theta)}{P_{atm}}$
 and
\begin{equation}
\begin{split}
&\mathcal{F}(\h, \kb, \mathcal{P}_1)\\
&=\frac{1}{P_{atm}}\left(\int_{(1+h(\theta))R_1\sin\theta}^{(1+\kb(\theta))R_0\sin\theta}\frac{F_1^2(y)}{y}dy-g\int_{R_1(1+\h(\theta))}^{R_0(1+\kb(\theta))}\rho_1 (\tilde{r})d\tilde{r}+c_1(h,\theta) \right) -\mathcal{P}_1(\theta),
\end{split}
\end{equation}
where $c_1$ is given by \eqref{c1}. 
We can at once identify one trivial solution of \eqref{abs_eq}, namely the one associated with the flow having the free surface and the interface in an undisturbed state, following the curvature of the Earth. That is, we have
$$\mathcal{F}(0,0,\mathcal{P}_1^0)=0,$$ where
$$\mathcal{P}_1^0=\frac{P_1^0}{P_{atm}}=\frac{1}{P_{atm}}\left(\int_{R_1\sin\theta}^{R_0\sin\theta}\frac{F_1^2(y)}{y}dy-g\int_{R_1}^{R_0}\rho_1(\tilde{r})\,d\tilde{r}+
\int_{\pi/2}^{\theta}F_1^2(R_1\sin\tilde{\theta})\cot\tilde{\theta}\,d\tilde{\theta}\right).$$
We recast now the problem of finding $h$ and $k$ as the problem of finding solutions to the equation 
\begin{equation}\label{surf_int_syst}
(\mathcal{G}(h),\mathcal{F}(\h,\kb,P_1))=0.
\end{equation}
The previous task will be carried out by resorting to the implicit function theorem. Therefore, we will need to compute the 
derivative of the map $(\mathcal{G},\mathcal{F})$ with respect to $(h,k)$. To this end we compute first
\begin{equation*}
\begin{aligned}
\frac{c_1(s\h,\theta)-c_1(0,\theta)}{s}=&\frac{1}{s}\int_{\pi/2}^{\theta}\left[F_1^2\big( (R_1+sR_1\h(\tilde{\theta}))\sin\tilde{\theta}\big)-F_1^2(R_1\sin\tilde{\theta})\right]\cot\tilde{\theta}\,d\tilde{\theta}\\
&+\int_{\pi/2}^{\theta}\frac{\h'(\tilde{\theta})}{1+s\h(\tilde{\theta})}F_1^2\big( (R_1+sR_1\h(\tilde{\theta}))\sin\tilde{\theta}\big)\,d\tilde{\theta}\\
&-gR_1\int_{\pi/2}^{\theta} \rho_1(R_1+sR_1\h(\tilde{\theta}))\h'(\tilde{\theta})\,d\tilde{\theta}.
\end{aligned}
\end{equation*}
Utilizing the mean value theorem for integrals we obtain
\begin{equation}\label{c1_deriv}
\begin{aligned}
\lim_{s\to 0}\frac{c_1(sh,\theta)-c_1(0,\theta)}{s}=&\int_{\pi/2}^{\theta}\left[(F_1^2)'(R_1\sin\tilde{\theta}) R_1\h(\tilde{\theta})\cos\tilde{\theta}
+\h'(\tilde{\theta}) F_1^2\big( R_1\sin\tilde{\theta}\big) \right]\,d\tilde{\theta}\\
&-gR_1\rho_1(R_1)\big( \h(\theta)-\h(\pi/2)\big)\\
=&[F_1^2(R_1\sin\theta) -g R_1\rho_1(R_1)] \h(\theta),
\end{aligned}
\end{equation}
where, in the last step, we took into account that $\h(\pi/2)=0$.
Using \eqref{c1_deriv} we obtain
\begin{equation}
\begin{aligned}
P_{atm}\cdot (\mathcal{G}_{\h}(0)\h)(\theta)=&\lim_{s\to 0}\frac{\mathcal{G}(s\h)-\mathcal{G}(0)}{s}\\
=&\lim_{s\to 0}\frac{1}{s}\left[\int_{R_1\sin\theta}^{(R_1+sR_1\h(\theta))\sin\theta}\frac{F^2(y)}{y}\,dy-g\int_{R_1}^{R_1+sR_1\h(\theta)} \rho(\tilde{r})\,d\tilde{r}\right]\\
&-\lim_{s\to 0}\frac{1}{s}\int_{\pi/2}^{\theta}\left(F_1^2((R_1+sR_1\h(\tilde{\theta}))\sin\tilde{\theta})-F_1^2(R_1\sin\tilde{\theta})\right)\cot\tilde{\theta}\,d\tilde{\theta}\\
&-\lim_{s\to 0}\int_{\pi/2}^{\theta}F_1^2((R_1+sR_1\h(\tilde{\theta}))\sin\tilde{\theta})\frac{\h'(\tilde{\theta})}{1+s\h(\tilde{\theta})}\,d\tilde{\theta}\\
&+gR_1\lim_{s\to 0}\int_{\pi/2}^{\theta}\rho_1(R_1+sR_1\h(\tilde{\theta}))\h'(\tilde{\theta})\,d\tilde{\theta}.
\end{aligned}
\end{equation}
Appealing to the mean value theorem for integrals we obtain
\\
\begin{equation}
 (\mathcal{G}_h(0)h)(\theta)=\big[F^2(R_1\sin\theta)-F_1^2(R_1\sin\theta)-gR_1(\rho(R_1)-\rho_1(R_1))\big]\h(\theta).
\end{equation}
\\
\begin{rem}\label{Gh}
Field data suggest that the sizes of the densities $\rho$ and $\rho_1$ relate as $\rho=\rho_1 (1+\sigma)$, where
$\tau \rightarrow \sigma (\tau)$ is a positive function satisfying $\sigma=\mathcal{O}(10^{-3})$. Moreover, guided by physical reality, we also assume that $\rho_1=\mathcal{O}(10^{3})\, kg\cdot m^{-3}$. Accounting for the previous considerations we find that
\begin{equation}
\begin{aligned}
F^2(R_1\sin\theta)&-F_1^2(R_1\sin\theta)=\\
&=\rho(R_1)(u(R_1,\theta)+\Omega R_1\sin\theta)^2-\rho_1(R_1)(u_1(R_1,\theta)+\Omega R_1\sin\theta)^2\\
&=\rho_1 (R_1)(u^2-u_1^2+2\Omega R_1 (u-u_1)\sin\theta)+\rho_1\sigma (u+\Omega R_1\sin\theta)^2\\
&<\rho_1 u^2 +2\Omega R_1\rho_1 u\sin\theta +\rho_1\sigma (u+\Omega R_1\sin\theta)^2.\\
\end{aligned}
\end{equation}
\\
Allowing for values of $w,\rho_1$, that bear relevance on physical grounds, we have from above that 
$$F^2(R_1\sin\theta)-F_1^2(R_1\sin\theta)<R_1 \,kg\cdot m^{-1}\cdot s^{-2},$$
while
$$gR(\rho-\rho_1)=g R_1\rho_1\sigma>4.9 R_1\,kg\cdot m^{-1}\cdot s^{-2}.$$
The previous two inequalities show that there is a constant $\alpha<0$ such that
 $$F^2(R_1\sin\theta)-F_1^2(R_1\sin\theta)-gR_1(\rho(R_1)-\rho_1(R_1))\leq \alpha$$ holds true for all $\theta\in [0,\pi]$. The latter finding guarantees that the map $$\h\rightarrow\mathcal{G}_{\h}(0)\h$$ is a linear homeomorphism from 
$C^1\left(\left[\pi/2,\pi/2+\varepsilon\right]\right)$ to $C^1\left(\left[\pi/2,\pi/2+\varepsilon\right]\right)$.
\end{rem}
We further compute
\begin{equation}
\begin{aligned}
\mathcal{F}_{\h}(0,0,P_1^0)\h=&\lim_{s\to 0}\frac{\mathcal{F}(s\h,0,P_1^0)-\mathcal{F}(0,0,P_1^0)}{s}\\
=&\frac{-F_1^2(R_1\sin\theta)\h+gR_1\rho_1(R_1)\h+F_1^2(R_1\sin\theta)\h-gR_1\rho_1(R_1)\h}{P_{atm}}\\=&0.
\end{aligned}
\end{equation}
Finally, utilizing \eqref{vel_form}, we obtain that for all $\theta\in [0,\pi]$ it holds
\begin{equation}
\begin{aligned}
(\mathcal{F}_{\kb}(0,0,P_1^0)\kb)(\theta)=&\lim_{s\to 0}\frac{\mathcal{F}(0,s\kb,P_1^0)-\mathcal{F}(0,0,P_1^0)}{s}\\
=&\frac{F_1^2(R_0\sin\theta)\kb(\theta)-gR_0\rho_1(R_0)\kb(\theta)}{P_{atm}}\\
=&\frac{\rho_1(R_0)}{P_{atm}}  \left[(u(R_0,\theta)+\Omega R_0 \sin\theta)^2 -gR_0\right]\kb(\theta).
\end{aligned}
\end{equation}
\begin{rem}\label{Fk}
Arguing as in Remark \ref{Gh} we conclude that there is $\beta<0$ such that 
$u(R_0,\theta)+\Omega R_0 \sin\theta)^2 -gR_0\leq\beta$ for all $\theta\in [0,\pi]$. 
Consequently, we have that 
the map $$\kb\rightarrow\mathcal{F}_{\kb}(0,0,P_1^0)\kb$$ is a linear homeomorphism  from 
$C\left(\left[\pi/2,\pi/2+\varepsilon\right]\right)$ to $C\left(\left[\pi/2,\pi/2+\varepsilon\right]\right)$.
\end{rem}
\noindent Hence
\\
 $$(\mathcal{G},\mathcal{F})_{\h,\kb}(0,0,P_1^0)=
\begin{pmatrix}\mathcal{G}_{\h}(0,0,P_1^0) & \mathcal{G}_{\kb}(0,0,P_1^0)\\  \mathcal{F}_{\h}(0,0,P_1^0) & \mathcal{F}_{\kb}(0,0,P_1^0)\end{pmatrix}=
\begin{pmatrix}\mathcal{G}_{\h}(0,0,P_1^0) &0 \\ 0 &  \mathcal{F}_{\kb}(0,0,P_1^0)\end{pmatrix}
$$
\\
is a linear homeomorphism from 
$C^1\left(\left[\pi/2,\pi/2+\varepsilon\right]\right)\times C\left(\left[\pi/2,\pi/2+\varepsilon\right]\right)$ to \\
$C^1\left(\left[\pi/2,\pi/2+\varepsilon\right]\right)\times C\left(\left[\pi/2,\pi/2+\varepsilon\right]\right)$ 
by Remark \ref{Gh} and Remark \ref{Fk}. Hence, by the implicit function theorem, we have the following existence result concerning the free surface and the interface.
\begin{thm}\label{existence}
 For any sufficiently small perturbation
$\mathcal{P}_1$ of $\mathcal{P}_1^0$ there exists a unique tuple $(\h,\kb)\in C^1\left(\left[\pi/2,\pi/2+\varepsilon\right]\right)\times C\left(\left[\pi/2,\pi/2+\varepsilon\right]\right)$ such that \eqref{surf_int_syst} holds true.
\end{thm}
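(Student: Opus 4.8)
The plan is to read the statement as a direct application of the implicit function theorem to the map
\[
\Phi(\h,\kb,\cP_1):=\big(\cG(\h),\,\cF(\h,\kb,\cP_1)\big),
\]
viewed as acting between the Banach spaces indicated in the text, with $(\h,\kb)$ ranging over $C^1([\pi/2,\pi/2+\e])\times C([\pi/2,\pi/2+\e])$, the parameter $\cP_1$ over $C([\pi/2,\pi/2+\e])$, and the target being $C^1([\pi/2,\pi/2+\e])\times C([\pi/2,\pi/2+\e])$. First I would record the reference solution: as already verified in the excerpt, $\cG(0)=0$ and $\cF(0,0,\cP_1^0)=0$, so $(0,0,\cP_1^0)$ is a zero of $\Phi$ about which to linearise.

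Next I would check the two structural hypotheses. The invertibility requirement is essentially in hand. Since $\cG$ does not depend on $\kb$ its $\kb$-derivative vanishes identically, giving the upper-right zero; the lower-left zero is the computed relation $\cF_{\h}(0,0,\cP_1^0)=0$. Thus the partial derivative in $(\h,\kb)$ at $(0,0,\cP_1^0)$ is the block-diagonal operator
\[
(\cG,\cF)_{\h,\kb}(0,0,\cP_1^0)=
\begin{pmatrix}\cG_{\h}(0) & 0\\[1mm] 0 & \cF_{\kb}(0,0,\cP_1^0)\end{pmatrix}.
\]
By Remark \ref{Gh} the diagonal entry $\cG_{\h}(0)$ is multiplication by the continuous factor $F^2(R_1\sin\theta)-F_1^2(R_1\sin\theta)-gR_1(\rho(R_1)-\rho_1(R_1))\le\alpha<0$, hence a linear homeomorphism of $C^1([\pi/2,\pi/2+\e])$; by Remark \ref{Fk} the entry $\cF_{\kb}(0,0,\cP_1^0)$ is multiplication by a factor bounded above by $\beta<0$, hence a linear homeomorphism of $C([\pi/2,\pi/2+\e])$. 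A diagonal operator whose blocks are invertible is invertible with bounded inverse, so the full derivative is a linear homeomorphism between the product spaces.

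The step that genuinely needs to be supplied is the continuous Fréchet differentiability of $\Phi$, since the excerpt exhibits only the Gateaux derivatives (obtained via the mean value theorem for integrals). Here I would argue that each building block of $\cG$ and $\cF$ is a composition of smooth Nemytskii (superposition) operators with bounded linear integration operators. Concretely, terms such as $\theta\mapsto F_1^2\big((1+\h(\theta))R_1\sin\theta\big)$ and $\theta\mapsto \rho_1\big(R_1(1+\h(\theta))\big)$ are superposition operators induced by the profiles $F_1^2$ and $\rho_1$, which act $C^1$-smoothly on the relevant function spaces provided $F,F_1,\rho,\rho_1$ and $d$ are taken at least continuously differentiable; the outer $\theta$- and $y$-integrations are bounded linear, hence smooth. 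The only delicate term is the one carrying $\h'$, namely $\int_{\pi/2}^{\theta}F_1^2\big((1+\h)R_1\sin\tilde\theta\big)\frac{\h'}{1+\h}\,d\tilde\theta$, but this is bilinear in the factors $F_1^2\circ(\cdots)$ and $\h'/(1+\h)$, each depending $C^1$-smoothly on $\h\in C^1$, so it too is $C^1$; this is precisely why the target regularity of the $\cG$-component is $C^1$ rather than $C$. Assembling these observations shows $\Phi$ is $C^1$ and that its Gateaux derivative, being continuous in the base point, is the Fréchet derivative. I expect this verification --- keeping careful track of which factor lives in $C^1$ and which in $C$, and confirming the superposition operators inherit the differentiability of the profiles --- to be the main technical obstacle, the remainder being bookkeeping.

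With the three hypotheses in place (the $C^1$ map $\Phi$, the zero $(0,0,\cP_1^0)$, and the invertible partial derivative), the implicit function theorem yields a neighbourhood of $\cP_1^0$ and a $C^1$ solution operator $\cP_1\mapsto(\h,\kb)$, with $(\h,\kb)=(0,0)$ at $\cP_1=\cP_1^0$, such that $\Phi(\h,\kb,\cP_1)=0$ and the tuple is unique in a neighbourhood of $(0,0)$. This is exactly the assertion of Theorem \ref{existence}: for every sufficiently small perturbation $\cP_1$ of $\cP_1^0$ there is a unique $(\h,\kb)\in C^1([\pi/2,\pi/2+\e])\times C([\pi/2,\pi/2+\e])$ solving \eqref{surf_int_syst}.
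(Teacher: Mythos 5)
Your proposal follows essentially the same route as the paper: identify the trivial zero $(0,0,\mathcal{P}_1^0)$, observe that the partial derivative in $(\h,\kb)$ is the block-diagonal operator whose entries are the multiplication operators of Remarks \ref{Gh} and \ref{Fk}, conclude it is a linear homeomorphism, and invoke the implicit function theorem. Your additional discussion of continuous Fr\'echet differentiability via superposition operators is a detail the paper leaves implicit (it only exhibits the Gateaux derivatives), so it is a welcome supplement rather than a departure.
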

We would like to conclude this section by illustrating the usefulness of the exact solutions we found. More precisely, we will show that observed salient features of equatorial flows (like a pronounced density stratification, giving rise to the pycnocline,
as well as the presence of depth-dependent current fields) can be successfully captured by the setting provided in \eqref{vel_form}.
\begin{rem}
Firstly, we note that it is possible to determine the azimuthal flow velocity at latitude $\theta$ from its value at the Equator ($\theta=\pi/2$): indeed, it is a consequence of formula \eqref{vel_form} that 
\begin{equation}\label{equat_f}
u(r,\theta)=u(r,\pi/2)\sin\theta +\frac{1}{\sqrt{\rho(r)}}(F(r\sin\theta)-F(r)\sin\theta)\,\,\,{\rm in}
\,\,\,\Omega_{bottom},
\end{equation}
and
\begin{equation}\label{equat_f1}
u_1(r,\theta)=u_1(r,\pi/2)\sin\theta +\frac{1}{\sqrt{\rho_1(r)}}(F_1(r\sin\theta)-F_1(r)\sin\theta)\,\,\,{\rm in}
\,\,\,\Omega_{top}.
\end{equation}
Moreover, the functions $F$ and $F_1$ can be reconstituted from the equatorial  part of the flow as
\begin{equation}\label{FF1_recon}
F(r)=\sqrt{\rho(r)}(u(r,\pi/2)+\Omega r),\quad F_1(r)=\sqrt{\rho_1(r)}(u_1(r,\pi/2)+\Omega r).
\end{equation}
Next we make use of formulas \eqref{equat_f},\eqref{equat_f1} and \eqref{FF1_recon} to show that the explicit solutions derived here are relevant in the modeling of 
the remarkable features of equatorial flows. One such occurrence is the Equatorial Undercurrent (EUC), a current that runs 
throughout the extent of the equatorial Pacific. While the surface flow is--due to the prevailing trade-winds-- preponderantly westward, a flow reversal occurs beneath the surface leading to an eastward-flowing jet whose core resides more of less on the thermocline and which is confined to depths of $100-200$ m, cf. \cite{CJaz, Izu, JMcPE,Sir}. Deeper down, at
depths in excess of about $240$ m, there is, essentially, an abyssal layer of still water.

Let us denote with $u_w$ the magnitude of the westward surface velocity and with $u_e$ the maximal eastward velocity at the core of the EUC jet. A simple choice for $u(r,\pi/2)$ and $u_1(r,\pi/2)$ that accommodates the previously described traits of EUC is the parabolic profile
\begin{equation}
u_1(r,\pi/2)=u_e-(u_e+u_w)\left(\frac{r-R_1}{R_0-R_1}\right)^2\,\,{\rm for}\,\,R_1\leq r\leq R_0,
\end{equation}
and 
\begin{equation}
u(r,\pi/2)=\left\{\begin{array}{ll}
 u_e-(u_e+u_w)\left(\frac{r-R_1}{R_0-R_1}\right)^2 &{\rm for}\quad\overline{R}\leq r\leq R_1\\
 0 &{\rm for}\quad r\leq\overline{R}
 \end{array}.\right.
\end{equation}
\\
Then, setting
\begin{equation}
F_1(r)=\sqrt{\rho_1(r)}\left[\Omega r +u_e-(u_e+u_w)\left(\frac{r-R_1}{R_0-R_1}\right)^2\right]\,\,{\rm for}\,\,R_1\leq r\leq R_0,
\end{equation}
and 
\begin{equation}
F(r)=\left\{\begin{array}{ll}
\sqrt{\rho(r)}\left[\Omega r+ u_e-(u_e+u_w)\left(\frac{r-R_1}{R_0-R_1}\right)^2\right] &{\rm for}\quad\overline{R}\leq r\leq R_1,\\
 \sqrt{\rho(r)}\Omega r &{\rm for}\quad r\leq\overline{R}
 \end{array}\right.
\end{equation}
leads to an elementary model of EUC. Here $r=R_0$ denotes the free surface at the Equator, $r=R_1$ is the depth at which EUC attains its maximum speed (typically this coincides with the thermocline) and 
$$\overline{R}:=R_1-(R_0-R_1)\sqrt{\frac{u_e}{u_e+u_w}}$$ represents the depth below which the fluid motion ceases, cf. Figure $\ref{euc}$.
\end{rem}

\begin{figure}[h]
\begin{center}
\includegraphics*[width=0.90\textwidth]{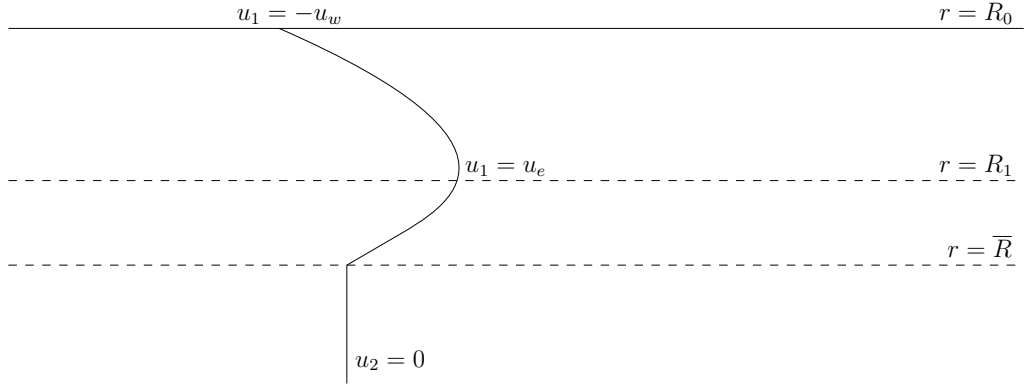}
\end{center}
\caption{A schematic displaying the salient features of typical equatorial current profiles in a band of about $2^{\circ}$ latitude around the Pacific equator: a significant fluid stratification, resulting in a pycnocline (also called thermocline) and a depth dependent current that changes from a westward flow near the surface to a quite significant flow to the east, located below the surface. The fluid motion dies out at depths beneath 240 m.} 
\label{euc}
\end{figure}

\section{Analysis of the exact solutions}\label{analysis}
This section is concerned with an in-depth analysis of the exact solutions derived previously. At first, we would like to reinforce the relevance of the cyclostrophic balance \eqref{Bernoulli}, which established the connection between the surface defining function and the pressure applied on the free surface. The next theorem allows us to reasonably conclude that our exact solution in a rotating frame with surface boundary conditions displays generally acceptable and expected physical properties. More precisely, it will be demonstrated that a growth in pressure generates a decline in height of the free surface away from the Equator.
\begin{thm}[Monotonicity properties]\label{mon}
Let $\mathcal{k}, \mathcal{P}_1$ denote the shape of the free surface and the pressure applied on the free surface, respectively.
If $\theta_0\in \left(\frac{\pi}{2},\frac{\pi}{2}+\varepsilon\right)$  denotes the polar angle of some location away from the Equator, then
\begin{equation}\label{h_beta}
\mathcal{P}_1'(\theta)<0\quad\textrm{if}\quad\mathcal{k}'(\theta)\geq 0,
\end{equation}
and
\begin{equation}\label{beta_h}
\mathcal{k}'(\theta)<0\quad\textrm{if}\quad\mathcal{P}_1'(\theta)\geq 0.
\end{equation} 
\end{thm}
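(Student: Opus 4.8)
The plan is to obtain both monotonicity statements from a single computation: differentiate the Bernoulli-type relation \eqref{Bernoulli} with respect to $\theta$ and read off the sign of the result. Since $\mathcal{P}_1=P_1/P_{atm}$ and $\mathcal{k}=k/R_0$ differ from $P_1$ and $k$ only by the positive constants $P_{atm}$ and $R_0$, the sign of $\mathcal{P}_1'$ agrees with that of $P_1'$, and the sign of $\mathcal{k}'$ with that of $k'$; it therefore suffices to analyse the dimensional derivative $P_1'(\theta)$. I would apply the Leibniz rule to \eqref{Bernoulli}, differentiating the two integrals (whose limits depend on $\theta$ both explicitly through $\sin\theta$ and implicitly through $h(\theta)$, $k(\theta)$) together with the boundary term $c_1(h,\theta)$, whose $\theta$-derivative is read off from \eqref{c1} via the fundamental theorem of calculus.

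The decisive observation, and the first thing I would verify carefully, is that every term carrying $h$ or $h'$ cancels. Concretely, the contribution of the lower limit $(R_1+h)\sin\theta$ of the first integral cancels the first summand of $c_1'(\theta)$ (both equal to $F_1^2\big((R_1+h)\sin\theta\big)\big[\cot\theta+h'/(R_1+h)\big]$ up to sign), while the $h'$-contribution of the hydrostatic integral cancels the second summand of $c_1'(\theta)$. After these cancellations one is left with an expression depending only on the free-surface data,
\[
P_1'(\theta) = \left[\frac{F_1^2\big((R_0+k)\sin\theta\big)}{R_0+k} - g\rho_1(R_0+k)\right]k'(\theta) + F_1^2\big((R_0+k)\sin\theta\big)\cot\theta.
\]

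Next I would fix the sign of each of the two terms. Using \eqref{vel_form}, one has $F_1^2\big((R_0+k)\sin\theta\big)=\rho_1(R_0+k)\,\big(u_1(R_0+k,\theta)+\Omega(R_0+k)\sin\theta\big)^2$, so the bracketed coefficient of $k'$ equals $\rho_1(R_0+k)\big[(u_1+\Omega(R_0+k)\sin\theta)^2/(R_0+k)-g\big]$, which is strictly negative by the estimate established in Remark \ref{Fk}. Moreover, for $\theta>\pi/2$ we have $\cot\theta<0$, and $F_1^2>0$ because the rigid-rotation term $\Omega(R_0+k)\sin\theta$ (of order several hundred m\,s$^{-1}$) keeps $u_1+\Omega(R_0+k)\sin\theta$ away from zero; hence the last term is strictly negative.

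The conclusion then follows immediately. If $\mathcal{k}'(\theta)\geq 0$, then $k'(\theta)\geq 0$, so the first term is the product of a negative coefficient with a nonnegative factor, hence $\leq 0$, while the second term is $<0$; therefore $P_1'(\theta)<0$, i.e. $\mathcal{P}_1'(\theta)<0$, which is \eqref{h_beta}. The companion assertion \eqref{beta_h} is exactly the contrapositive of \eqref{h_beta} (the implication ``$\mathcal{k}'\geq 0\Rightarrow\mathcal{P}_1'<0$'' is logically equivalent to ``$\mathcal{P}_1'\geq 0\Rightarrow\mathcal{k}'<0$''), so no further argument is needed. I expect the main obstacle to be purely the bookkeeping in the differentiation step: one must track that the limits depend on $\theta$ through both $\sin\theta$ and $h,k$, and confirm the precise cancellation of the $h$-terms, since it is exactly this cancellation that reduces the derivative to surface data and makes the sign analysis transparent; a secondary, minor point is to justify the strict negativity of the $\cot\theta$-term by ruling out the vanishing of $F_1$.
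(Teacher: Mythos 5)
Your proposal is correct and follows essentially the same route as the paper: differentiate the Bernoulli relation \eqref{Bernoulli} in $\theta$, observe that all terms carrying $h$ and $h'$ cancel so that only the expression \eqref{dep_deriv} in the surface data survives, and conclude from the strict negativity of the coefficient of $k'$ (the estimate of Remark \ref{Fk}) together with $\cot\theta<0$ for $\theta>\pi/2$, with \eqref{beta_h} obtained as the contrapositive of \eqref{h_beta}. Your explicit justification that $F_1\neq 0$ (needed for the strictness of the inequalities when $k'=0$) is a point the paper leaves implicit, but it does not alter the argument.
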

\begin{proof}
To begin with, we notice that an iterative bootstrapping procedure \cite{Ber} allows us to transfer smoothing properties from $\mathcal{P}_1$ to $\mathcal{k}$.
By a differentiation with respect to $\theta$ in \eqref{abs_eq} we obtain
\begin{equation}
\begin{split}
P_{atm}\mathcal{P}_1'(\theta)=&\frac{F_1^2\left((R_0+k(\theta))\sin\theta\right)}{(R_0+k(\theta))\sin\theta}
\left(k'(\theta)\sin\theta+(R_0+k(\theta))\cos\theta\right)\\
\\
&-\frac{F_1^2\left((R_1+h(\theta))\sin\theta\right)}{(R_1+h(\theta))\sin\theta}
\left(h'(\theta)\sin\theta+(R_1+h(\theta))\cos\theta\right)\\
\\
&-g\left(\rho_1(R_0+k(\theta))k'(\theta)-\rho_1(R_1+h(\theta))h'(\theta)\right)\\
\\
&+\frac{F_1^2\left((R_1+h(\theta))\sin\theta\right)h'(\theta)}{R_1+h(\theta)}+F_1^2\left((R_1+h(\theta))\sin\theta\right)\cot\theta\\
\\
&-g\rho_1(R_1+h(\theta))h'(\theta).
\end{split}
\end{equation}
Hence, due to cancellations in the above formula, we obtain
\begin{equation}\label{dep_deriv}
\begin{split}
P_{atm}\mathcal{P}_1'(\theta)=&\left(\frac{F_1^2\left((R_0+k(\theta))\sin\theta\right)}{R_0+k(\theta)}-g\rho_1(R_0+k(\theta))\right)k'(\theta)\\
\\
&+F_1^2\left((R_0+k(\theta))\sin\theta\right)\cot\theta\\
\\
=& \left(\frac{\big[u_1(R_0+k(\theta),\theta)+\Omega (R_0+k(\theta) )\sin\theta \big]^2}{ R_0+k(\theta)  }-g\right) \rho_1(R_0+k(\theta)) k'(\theta)\\
\\
&+F_1^2\left((R_0+k(\theta))\sin\theta\right)\cot\theta\\
\\
=&\left(\frac{\big[u_1(R_0+k(\theta),\theta)+\Omega (R_0+k(\theta) )\sin\theta \big]^2}{ 1+\mathcal{k}(\theta)  }-gR_0\right) \rho_1(R_0+k(\theta)) \mathcal{k}'(\theta)\\
\\
&+F_1^2\left((R_0+k(\theta))\sin\theta\right)\cot\theta.
\end{split}
\end{equation}
Employing a similar type of argument as in Remark \ref{Gh} and Remark \ref{Fk} we notice that 
\\
$$
\frac{\big[u_1(R_0+k(\theta),\theta)+\Omega (R_0+k(\theta) )\sin\theta \big]^2}{ 1+\mathcal{k}(\theta)  }-gR_0<0,
$$ 
\\
for velocities $u_1$ which occur in realistic physical scenarios. The latter inequality and relation \eqref{dep_deriv} prove 
\eqref{h_beta} and \eqref{beta_h}.
\end{proof}

\noindent Another interesting and relevant upshot of the solutions proved to exist by Theorem \ref{existence} refers to the regularity of the interface defining function.
\begin{thm}[Regularity of the interface]\label{reg}
Let us assume that the distributions of density $\rho$ and $\rho_1$ are related by the equality $\rho=\rho_1(1+\sigma)$, where $\tau\rightarrow\sigma(\tau)$ is a positive function satisfying $\sigma=\mathcal{O}(10^{-3})$, cf. \cite{Kes}.
Then, the interface defining function $h$ belongs to $C^{\infty}\left[\pi/2,\pi/2+\varepsilon\right]$.
\end{thm}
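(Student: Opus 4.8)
The plan is to convert the implicit relation $\mathcal{G}(\h)=0$ of \eqref{mathcalG} into an explicit first-order equation $\h'(\theta)=\Phi(\theta,\h(\theta))$ with $\Phi$ smooth, and then to bootstrap the $C^1$-regularity furnished by Theorem \ref{existence} all the way up to $C^\infty$, in the spirit of the bootstrapping already used in the proof of Theorem \ref{mon}. Since $h=R_1\h$, it suffices to treat $\h$.

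The first step is to differentiate $\mathcal{G}(\h)(\theta)=0$ with respect to $\theta$. Writing $a(\theta):=(1+\h(\theta))R_1\sin\theta$ for the $\h$-dependent upper limit and $b(\theta):=(R+d(\theta))\sin\theta$ for the known, $\h$-independent lower limit of the first integral in \eqref{mathcalG}, the fundamental theorem of calculus produces boundary terms $\tfrac{F^2(a)}{a}a'$, etc. The key algebraic simplification is that $\tfrac{R_1\sin\theta}{a}=\tfrac{1}{1+\h}$ and $\tfrac{(1+\h)R_1\cos\theta}{a}=\cot\theta$, so that $\tfrac{F^2(a)}{a}a'=F^2(a)\big[\tfrac{\h'}{1+\h}+\cot\theta\big]$; this is exactly the combination $\cot\tilde\theta+\tfrac{\h'}{1+\h}$ already present in the $c_1$-integral, whose differentiation contributes the matching term $-F_1^2(a)\big[\cot\theta+\tfrac{\h'}{1+\h}\big]$. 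Collecting the coefficient of $\h'(\theta)$ leaves
$$
\mathcal{D}(\theta):=\frac{F^2(a)-F_1^2(a)}{1+\h(\theta)}-gR_1\big(\rho(R_1(1+\h))-\rho_1(R_1(1+\h))\big),
$$
while all remaining terms — the two $\cot\theta$ contributions $(F^2(a)-F_1^2(a))\cot\theta$ together with the purely $\theta$-dependent boundary terms from $b$ and from $f'$ — assemble into a quantity depending on $\theta$ and $\h(\theta)$ but \emph{not} on $\h'(\theta)$.

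The crucial point, and the main obstacle, is to guarantee that $\mathcal{D}(\theta)$ is bounded away from zero along the \emph{actual} solution $\h$, not merely at the trivial state $\h\equiv 0$. At $\h\equiv 0$ the quantity $\mathcal{D}$ reduces exactly to the bracket $F^2(R_1\sin\theta)-F_1^2(R_1\sin\theta)-gR_1(\rho(R_1)-\rho_1(R_1))$, which Remark \ref{Gh} shows is $\leq\alpha<0$ precisely under the hypothesis $\rho=\rho_1(1+\sigma)$ with $\sigma>0$ and $\sigma=\mathcal{O}(10^{-3})$ assumed here. Since those estimates are robust under small perturbations and the implicit function theorem delivers $\h$ close in $C^1$ to $\h\equiv 0$, a continuity and uniform-smallness argument keeps $\mathcal{D}(\theta)<0$ uniformly on $[\pi/2,\pi/2+\e]$. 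Dividing through by $\mathcal{D}(\theta)$ then yields the explicit form $\h'(\theta)=\Phi(\theta,\h(\theta))$, where $\Phi$ is built from $F,F_1,\rho,\rho_1,d$ (and $f$) and is $C^\infty$ in both arguments once these data are taken smooth.

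Finally I would close by induction. Theorem \ref{existence} gives $\h\in C^1$; assuming $\h\in C^n$, the right-hand side $\Phi(\theta,\h(\theta))$ is a composition of the smooth map $\Phi$ with $\h\in C^n$, hence itself $C^n$, so $\h'\in C^n$, i.e. $\h\in C^{n+1}$. Iterating yields $\h\in C^\infty[\pi/2,\pi/2+\e]$, and therefore $h=R_1\h\in C^\infty[\pi/2,\pi/2+\e]$ as claimed. The only delicate ingredient is the uniform non-degeneracy of $\mathcal{D}$ along the solution; everything downstream is a routine bootstrap, and the jump from the $C^1$ setting of the implicit function theorem to $C^\infty$ is exactly the regularizing effect of having solved the differentiated equation for $\h'$.
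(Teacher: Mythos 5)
Your proposal follows essentially the same route as the paper: differentiate $\mathcal{G}(\h)(\theta)=0$ in $\theta$, isolate the coefficient of $\h'(\theta)$, show it is bounded away from zero because the buoyancy term $gR_1\rho_1\sigma$ dominates the inertial terms under the hypothesis $\rho=\rho_1(1+\sigma)$, and then bootstrap from $C^1$ to $C^\infty$. The only minor difference is that the paper estimates this coefficient directly along the actual solution $\h$ (cf.\ \eqref{rhs_term} and \eqref{g_term}) rather than by a continuity/perturbation argument from $\h\equiv 0$, but both rest on the same physical bounds and the conclusion is identical.
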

\begin{proof}
We invoke equation \eqref{mathcalG} and
differentiate with respect to $\theta$ in $$\mathcal{G}(\h)(\theta)=0\,\,{\rm for}\,\,{\rm all}\,\,\theta,$$
obtaining
\\
\begin{equation}\label{eq_hprime}
\begin{aligned}
&\Big[\frac{F^2((R_1+R_1\h(\theta))\sin\theta)-F_1^2((R_1+R_1\h(\theta))\sin\theta)}{1+\h(\theta)}  \\
&-gR_1\big( \rho(R_1+h(\theta))-\rho_1(R_1+h(\theta))\big)\Big]\h'(\theta)\\
&+\left[F^2((R_1+R_1\h(\theta))\sin\theta)-F_1^2((R_1+R_1\h(\theta))\sin\theta)\right]\cot\theta =0.
\end{aligned}
\end{equation}
\\
We are going to prove that the bracket that multiplies $\h'(\theta)$ in the above formula is strictly positive. To this end we notice first that 
\\
\begin{equation}\label{FF_1}
\begin{aligned}
&\frac{F^2((R_1+R_1\h(\theta))\sin\theta)- F_1^2((R_1+R_1\h(\theta))\sin\theta)}{1+\h(\theta)}=\\
&\hspace{-3cm}\rho(R_1+R_1\h)\frac{\big( u(R_1+R_1\h,\theta)+\Omega R_1 (1+\h)\sin\theta\big)^2}{1+\h}\\
&\hspace{-3.3cm}-\rho_1(R_1+R_1\h)\frac{\big( u_1(R_1+R_1\h,\theta)+\Omega R_1(1+\h)\sin\theta\big)^2}{1+\h}.
\end{aligned}
\end{equation}
\\
Utilizing the relations between $\rho$ and $\rho_1$, as in the statement of the Theorem, we see that
 the right hand side of \eqref{FF_1} equals
\begin{equation}\label{rhs_term}
\begin{aligned}
&\frac{\rho_1}{1+\h}(u^2-u_1^2+2\Omega R_1 u(1+\h)\sin\theta-2\Omega u_1 R_1(1+\h)\sin\theta)\\
&+\rho_1\sigma \frac{ (u+\Omega R_1(1+\h)\sin\theta)^2}{1+\h}\\
&< \frac{\rho_1 u^2}{1+\h}+2\Omega R_1\rho_1 u\sin\theta +\rho_1\sigma \frac{ (u+\Omega R_1(1+\h)\sin\theta)^2}{1+\h}\\
&<R_1\left[\frac{\rho_1 u^2}{R_1}+2\Omega\rho_1 u\sin\theta +\rho_1\sigma \frac{ (u+\Omega R_1(1+\h)\sin\theta)^2}{R_1}\right]<R_1 \,kg\cdot m^{-1}\cdot s^{-2},
\end{aligned}
\end{equation}
where, for the last inequality, we took into account the realistic physical values of the involved quantities. On the other hand,
\begin{equation}\label{g_term}
gR_1(\rho-\rho_1)=g R_1\rho_1\sigma>4.9R_1\,kg\cdot m^{-1}\cdot s^{-2}.
\end{equation}
Owing to \eqref{rhs_term} and \eqref{g_term} we conclude that, indeed, the coefficient of $\h'(\theta)$ in 
\eqref{eq_hprime} is strictly negative. We now see that, since already $\h\in C^1 \left[\pi/2,\pi/2+\varepsilon\right]$, we
can differentiate in \eqref{eq_hprime} and infer that, in fact, $\h\in C^2\left[\pi/2,\pi/2+\varepsilon\right]$. An iteration of the previous proves the claim about the infinite differentiability of $h$.
\end{proof}

\begin{rem}
We present now an explicit solution (in terms of the velocity field, the pressure and the free surface and the interface)
for an apriori given density. More precisely, the scenario that we consider assumes that the density $\bm{\rho}$ is as follows: throughout the bottom layer
$\{(r,\theta,\varphi): R+d(\theta)\leq r\leq R_1+h(\theta)\}$ the density is constant and equals $\rho=\mathcal{O}(10^3)\,kg\cdot m^{-3}$. For the upper layer
$\{(r,\theta,\varphi): R_1+h(\theta)\leq r\leq R_0+k(\theta)\}$ we set 
\begin{equation}\label{dens1_expl}
\rho_1(r)=\rho-a_1 r.
\end{equation}
We begin by specifying the velocity field, that is
we particularize now the functions $F$ and $F_1$ appearing in the formula \eqref{vel_form} of the azimuthal velocity $\mathbf{u}$. We will then solve a differential equation for the interface defining function $h$, with the help of which we will find, via \eqref{Bernoulli}, the formula for the free surface defining function $k$.
More precisely, we set
 \begin{equation}\label{FF1_expl}
 F(x)=\Omega \sqrt{\rho} x\,\,{\rm and}\,\,F_1(x)=0\,\,{\rm for}\,\,{\rm all}\,\, x>0.
 \end{equation}
 Theferore, the azimuthal velocity is, cf. \eqref{vel_form}
 \begin{equation}\label{vel_exampl}
{\bf u}(r,\theta)=\left\{
\begin{aligned}
&u(r,\theta)=0\quad {\rm if}\quad R+d(\theta)\leq r\leq R_1+h(\theta),\\
&u_1(r,\theta)=-\Omega r\sin\theta \quad {\rm if}\quad R_1+h(\theta)\leq r\leq R_0+k(\theta).
\end{aligned} \right.
\end{equation}
\\
We will employ now \eqref{eq_hprime} to derive the equation for the function $h(\theta)$ defining the interface.
By means of the previous assumptions we obtain first that 
$$F^2((R_1+R_1\h(\theta))\sin\theta)=\rho\Omega^2 R_1^2(1+\mathcal{h}(\theta))^2\sin^2\theta$$
\\
and 
$$\rho(R_1+h(\theta))-\rho_1(R_1+h(\theta))=a_1 (R_1+h(\theta))=a_1 R_1 (1+\mathcal{h}(\theta)).$$
\\
Remaining consistent with the proportion between $\rho$ and $\rho_1$ proposed in Theorem \ref{reg} we choose 
$a_1:=\frac{1}{R_1}\cdot 2\, kg\cdot m^{-3}$.  Availing of the previous considerations, we obtain from \eqref{eq_hprime} that $\mathcal{h}$ satisfies the equation
\begin{equation}
\left[\rho \Omega^2\sin^2\theta -g a_1\right]\mathcal{h}'(\theta)+\rho \Omega^2  (1+\mathcal{h}(\theta))\sin\theta\cos\theta=0.
\end{equation}
\\ Using that $h(\pi /2)=0$ we solve the previous differential equation and obtain
\\
\begin{equation}
1+\mathcal{h}(\theta)=\sqrt{\frac{\rho\Omega^2-ga_1}  {\rho\Omega^2\sin^2\theta-ga_1}}.
\end{equation}
Taking into account that $\rho\Omega^2\approx 0.53\cdot 10^{-5} kg\cdot m^{-3}\cdot s^{-2}$ and $ga_1\approx  0.31\cdot 10^{-5}kg\cdot m^{-3}\cdot s^{-2}$ the equation of the interface is 
\\
\begin{equation}
r=R_0+h(\theta)=R_0\sqrt{ \frac{0.22}  {0.53\sin^2\theta-0.31}}.
\end{equation}
The previous formula, forecasts a rise in the height of the interface from $R_0$ at the Equator to $10^{-5}R_0$ at about $20$ km south and north of Equator, cf. the graph in Figure $\ref{interface}$.
\\
\begin{figure}
\begin{center}
\includegraphics*[width=0.85\textwidth]{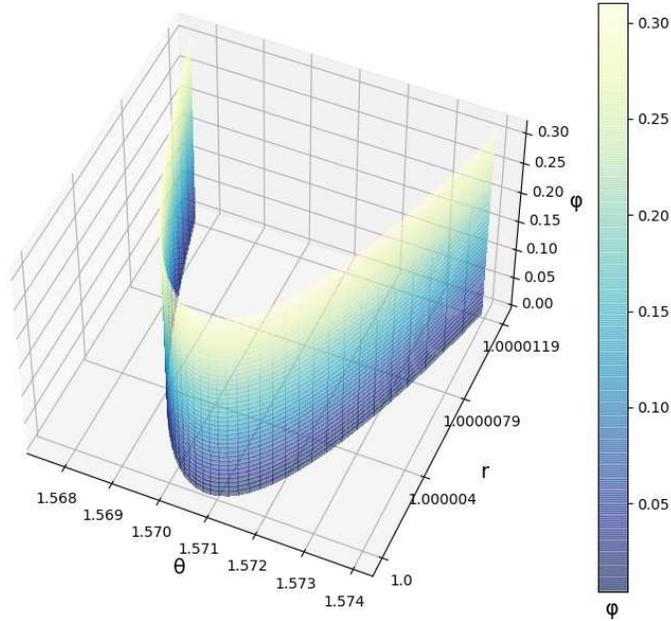}
\caption{The graph of the nondimensional function $1+\mathcal{h}(\theta)$. The range of the polar angle $\theta$ is the interval $[\frac{\pi}{2}-0.003141, \frac{\pi}{2}+0.003141]$. This corresponds in, physical variables, to strips of about 20 km width on each side of the Equator.} 
\label{interface}
\end{center}
\end{figure}

We now proceed to the explicit determination of the pressure in the upper layer $\Omega_{top}$. It will turn out that, for the special choices \eqref{dens1_expl} and \eqref{FF1_expl} we can determine an explicit formula for the pressure in the upper layer too. To this end, we note first that,
since we chose $F_1=0$, we obtain from \eqref{pressure_up} and \eqref{c1} the formula
\\
\begin{equation}\label{expl_interface}
\begin{aligned}
p_1(r,\theta)=&-g\int_{R_1+h(\theta))}^r \rho_1(\tilde{r})d\tilde{r}-g\int_{\pi/2}^{\theta}\rho_1 (R_1+h(\tilde{\theta}))
h'(\tilde{\theta})d\tilde{\theta}\\
=&-g\int_{R_1+h(\theta))}^r (\rho-a_1\tilde{r})d\tilde{r}-g\int_{R_1}^{R_1+h(\theta)}(\rho-a_1 z)dz\\
=&-\rho g (r-R_1)+\rho g h(\theta)+\frac{a_1 g}{2}[r^2-(R_1+h(\theta))^2]\\
&-\rho g h(\theta)+\frac{a_1 g}{2}(R_1+h(\theta))^2-\frac{a_1 g R_1^2}{2}\\
=&-\rho g (r-R_1)+\frac{a_1 g}{2}(r^2-R_1^2).
\end{aligned}
\end{equation}
\\
Now, the Bernoulli relation \eqref{Bernoulli} allows us to obtain the formula for the free surface defining function  $k(\theta)$. More precisely,  imposing a pressure $P_1(\theta)$ on the free surface, the surface defining function $\theta\rightarrow k(\theta)$ satisfies the equation
\\
\begin{equation}\label{eq_free_surf}
\frac{g a_1}{2}(R_0+k(\theta))^2-g\rho (R_0+k(\theta))+g\rho R_1 -\frac{g a_1 R_1^2}{2}-P_1(\theta)=0\,\,{\rm for}\,\,{\rm all}\,\,\theta.
\end{equation}
\\
We treat the previous equation as a second degree equation in the unknown $R_0+k(\theta)$ whose discriminant equals
$g^2(\rho-a_1 R_1)^2 +2ga_1 P_1(\theta)$.
Taking into account the reasonable physical size range for $R_0+k(\theta)$ we infer that 
\begin{equation}\label{expl_surface}
R_0+k(\theta)=\frac{g\rho -\sqrt{g^2(\rho-a_1 R_1)^2 +2ga_1 P_1(\theta)} }{g a_1}
\end{equation}
is the only possible solution to equation \eqref{eq_free_surf}.
\end{rem}

\section{Conclusions}
The analysis performed here provided explicit and exact steady solutions in spherical coordinates describing flows with a free surface and an interface which separates fluid regions of discontinuous densities. These flows manifest an azimuthal propagation direction with no variation in the azimuthal direction. Nevertheless, the azimuthal component of the velocity accommodates any arbitrary variation of speed with depth below the surface: indeed the functions $F$ and $F_1$ appearing in formula \eqref{vel_form},  can be chosen arbitrary. For any such choice, $w$ and $w_1$ from \eqref{vel_form}
provide a solution of the Euler equations \eqref{specialflow}. The usefulness of these solutions is rendered by their ability to integrate observed prominent features of remarkable oceanic flows like EUC: evidence in support of this assertion 
was provided in Remark 3.4.

The results in this paper represent an advancement toward the analytical understanding of geophysical water flows. Indeed,
allowing for a discontinuous density (that gives rise to an interface) was not addresed in previous papers \cite{CJaz, CJazAcc, CJPoF17, HenMarJDE, HenMarARMA} which handled geophysical flows with a preferred propagation direction.

The constraints that we impose on the solutions refer to the continuity of the pressure along the interface and the requirement that the pressure on the surface matches some given, but otherwise, arbitrary function. The latter leads to the 
Bernoulli relation on the free surface and shows that variations in pressure give rise to changes in the shape of the free surface. While the azimuthal velocity and the pressure function in the lower layer $\Omega_{bottom}$ are given explicitly, the pressure in the top layer $\Omega_{top}$ and the interface and surface defining functions, $h$ and $k$, respectively, appear (in general) in implicit form, cf. equations \eqref{Bernoulli} and \eqref{eq_int}. However, for some specific choice of the density function and of the azimuthal velocity, cf. \eqref{FF1_expl} and \eqref{vel_exampl}, we were able to provide a fully explicit solution describing an azimuthal flow: that is, also the interface and the free surface were given explicitly, cf. \eqref{expl_interface}, \eqref{expl_surface}.

One aspect that needs to be considered in future works refers to the 
lack of explicit formulas for the surface and interface defining functions in the general case. This drawback was, to some degree,
alleviated in Section \ref{analysis} by providing qualitative properties that reinforce the relevance of our exact solution. More precisely, we showed that a growth in the pressure along the free surface gives rise to a decline in height of the latter away from the Equator. We have also proved that the interface has infinite regularity.

Future endeavours refer to the need to address some other type of solutions which allow for more general velocity fields (e. g. having two or all three components non-vanishing) which are relevant in the study of other realistic ocean flows. For instance, one might want to allow a slow evolution in the azimuthal direction; this additional dependence on the slow variable will then yield a three-dimensional flow.\\
\\
\noindent {\bf Acknowledgements}: The author gratefully acknowledges the support of the Austrian Science Fund (FWF) through research grant P 33107-N. Comments and suggestions from anonymous referees are gratefully acknowledged. The author would like to thank Prof. B. Basu (School of Engineering, Trinity College Dublin) for the technical support in producing Figure 3.

\noindent {\bf Data availability}: The data that support the findings of this study are available from the corresponding author upon reasonable request.

\end{document}